\newtheorem{theorem}{Theorem}
\newtheorem{remark}{Remark}
\title{A Counterexample to the Vector Generalization of Costa's EPI, \\and Partial Resolution}
\author{Thomas~A.~Courtade$^*$, Guangyue Han$^{\dagger}$ and Yaochen Wu$^{\dagger}$\vspace{1ex}\\
$^{*}$UC Berkeley, Department of Electrical Engineering and Computer Sciences\\
$^{\dagger}$The University of Hong Kong, Department of Mathematics
}
\begin{document}
\maketitle

\begin{abstract}
We give a counterexample to the vector generalization of Costa's entropy power inequality (EPI) due to Liu, Liu, Poor and Shamai.  In particular, the claimed inequality can fail if the matix-valued parameter in the convex combination does not commute with the covariance of the additive Gaussian noise.  Conversely,   the inequality  holds if these two matrices commute.\end{abstract}

\vspace{-3ex}

\begin{center}
\rule{0.5\textwidth}{.4pt}
\end{center}
%\vskip1ex
%\section{Introduction}

For a random vector $X$ with density on $\mathbb{R}^n$, we let $h(X)$ denote its differential entropy.  Let $Z\sim N(0,\Sigma_Z)$ be a Gaussian vector in $\mathbb{R}^n$ independent of $X$, and let $\mathrm{A}$ be a (real symmetric) positive semidefinite $n\times n$ matrix satisfying $\mathrm{A} \preceq \mathrm{I}$ with respect to the positive semidefinite ordering, where $\mathrm{I}$ denotes the identity matrix.   In \cite[Theorem 1]{bib:GenCosta}, Liu, Liu, Poor and Shamai claim the following  generalization of Costa's EPI\footnote{All entropies are taken to be base $e$ throughout.  Also, for a  positive semidefinite matrix $\mathrm{M}$, we write $\mathrm{M}^{1/2}$ to denote the unique positive semidefinite matrix such $\mathrm{M} = \mathrm{M}^{1/2}\mathrm{M}^{1/2}$.} \cite{costa1985new}:
\begin{align}
e^{\frac{2}{n}h (X + \mathrm{A}^{1/2} Z)  } &\geq |\mathrm{I}-\mathrm{A}|^{1/n} e^{\frac{2}{n} h(X) } + |\mathrm{A}|^{1/n} e^{\frac{2}{n}h(X+ Z)}.\label{eq:costaEPI}
\end{align}
Liu et al.\ apply \eqref{eq:costaEPI} in their investigation of the secrecy capacity region of the degraded vector Gaussian broadcast channel with layered confidential messages. 

The purpose of this note is to demonstrate that \eqref{eq:costaEPI}  can fail for  $n\geq2$, and also to offer a partial resolution.  Toward the first goal, consider $n=2$ and let us define
\begin{align}
\Sigma_X = \begin{pmatrix} 200 & 100\\ 100 & 51
\end{pmatrix},  ~~~
\Sigma_Z = \begin{pmatrix} 200 & 0\\ 0 & 1
\end{pmatrix}, ~~~
\mathrm{A}^{1/2} = \frac{1}{20}\begin{pmatrix} 10 & 5\\ 5 & 17
\end{pmatrix}.\label{matrixChoice}
\end{align}
Taking $X\sim N(0,\Sigma_X)$ and $Z\sim N(0,\Sigma_Z)$ to be independent Gaussian vectors, we have
\begin{align*}
\tfrac{1}{2\pi e}e^{\frac{2}{n}h (X + \mathrm{A}^{1/2} Z)  }  = |\Sigma_X + \mathrm{A}^{1/2} \Sigma_Z \mathrm{A}^{1/2} |^{1/2} \approx 19.53. %= \frac{24707}{1265}
\end{align*}
On the other hand, 
\begin{align*}
\tfrac{1}{2\pi e}\left(  |\mathrm{I}-\mathrm{A}|^{1/n} e^{\frac{2}{n} h(X) } +  |\mathrm{A}|^{1/n} e^{\frac{2}{n}h(X+ Z)} \right)
 &=  |\mathrm{I}-\mathrm{A}|^{1/2}  |\Sigma_X|^{1/2}+|\mathrm{A}|^{1/2}   |\Sigma_X+ \Sigma_Z|^{1/2}\approx 40.28. %=\frac{4229}{105}
\end{align*} 
Thus, a contradiction to \eqref{eq:costaEPI} is obtained.  We remark that there is nothing particularly unique about this counterexample, except that the matrices were  chosen to violate \eqref{eq:costaEPI} by a significant margin.

Evidently,  further assumptions are needed  in order for \eqref{eq:costaEPI} to hold.  As we now argue, it suffices for the matrices $\mathrm{A}$ and $\Sigma_Z$ to commute.  Fortunately, the critical application of inequality \eqref{eq:costaEPI} by Liu et al. (see \cite[p. 1877]{bib:GenCosta}) assumes only that $\mathrm{A}$ and $\Sigma_Z$ are diagonal matrices, so the main conclusions of  \cite{bib:GenCosta} appear to be unaffected, aside from \cite[Theorem 1]{bib:GenCosta}.  That said, reference \cite{bib:GenCosta} has been cited numerous times in the literature, so other published results may be impacted. We do not attempt to give an account of these consequences here, but note that this matter  deserves independent attention. 
\begin{theorem}\label{thm:FixedCosta}
Let $X$ be a random vector with density on $\mathbb{R}^n$, and let $Z\sim N(0,\Sigma_Z)$ be a Gaussian vector in $\mathbb{R}^n$ independent of $X$.  If $\mathrm{A}  \preceq \mathrm{I}$ is   positive semidefinite and commutes with $\Sigma_Z$, then 
\begin{align}
e^{\frac{2}{n}h (X + \mathrm{A}^{1/2} Z)  } &\geq |\mathrm{I}-\mathrm{A}|^{1/n} e^{\frac{2}{n} h(X) } + |\mathrm{A}|^{1/n} e^{\frac{2}{n}h(X+ Z)}.\notag
\end{align}
\end{theorem}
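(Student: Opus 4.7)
The plan is to exploit the commutativity hypothesis to reduce the problem to a matrix generalization of Costa's EPI with standard Gaussian noise, and then to establish the reduced inequality by combining Minkowski's and Oppenheim's determinant inequalities (for Gaussian $X$) with Fisher-information techniques (for the general case).

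\emph{Reduction to $\Sigma_Z = \mathrm{I}$.} Since $\mathrm{A}$ and $\Sigma_Z$ are symmetric positive semidefinite matrices that commute, the spectral theorem provides an orthogonal $U$ diagonalizing both. Replacing $X$ by $U^TX$ and $Z$ by $U^TZ$ — under which $h$ is invariant — we may assume $\mathrm{A} = \mathrm{diag}(a_i)$ and $\Sigma_Z = \mathrm{diag}(\sigma_i^2)$. A standard limiting argument (replace $\Sigma_Z$ by $\Sigma_Z + \varepsilon\mathrm{I}$ and let $\varepsilon \to 0^+$) handles singular $\Sigma_Z$, so assume $\Sigma_Z \succ 0$. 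Set $Y = \Sigma_Z^{-1/2}X$ and $W = \Sigma_Z^{-1/2}Z \sim N(0,\mathrm{I})$. Commutativity yields $\mathrm{A}^{1/2}\Sigma_Z^{1/2} = \Sigma_Z^{1/2}\mathrm{A}^{1/2}$, hence $X + \mathrm{A}^{1/2}Z = \Sigma_Z^{1/2}(Y + \mathrm{A}^{1/2}W)$. Applying $h(\Sigma_Z^{1/2}V) = h(V) + \tfrac{1}{2}\log|\Sigma_Z|$ to each of the three relevant entropies, the factor $|\Sigma_Z|^{1/n}$ cancels uniformly and the target reduces to
\[
e^{\frac{2}{n}h(Y + \mathrm{A}^{1/2}W)} \geq |\mathrm{I}-\mathrm{A}|^{1/n}\, e^{\frac{2}{n}h(Y)} + |\mathrm{A}|^{1/n}\, e^{\frac{2}{n}h(Y+W)}, \qquad W \sim N(0,\mathrm{I}). \tag{$\star$}
\]

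\emph{Gaussian case of $(\star)$.} When $Y \sim N(0,\Sigma_Y)$, $(\star)$ amounts to $|\Sigma_Y + \mathrm{A}|^{1/n} \geq |\mathrm{I}-\mathrm{A}|^{1/n}|\Sigma_Y|^{1/n} + |\mathrm{A}|^{1/n}|\Sigma_Y + \mathrm{I}|^{1/n}$. Decompose $\Sigma_Y + \mathrm{A} = S + T$ with $S := \Sigma_Y^{1/2}(\mathrm{I}-\mathrm{A})\Sigma_Y^{1/2}$ and $T := \mathrm{A} + \Sigma_Y^{1/2}\mathrm{A}\Sigma_Y^{1/2}$; both are positive semidefinite, and clearly $|S| = |\mathrm{I}-\mathrm{A}||\Sigma_Y|$. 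In the eigenbasis of $\Sigma_Y = \mathrm{diag}(\sigma_i^2)$, $T$ coincides with the Hadamard product $\mathrm{A}\odot B$ where $B = \mathbf{1}\mathbf{1}^T + \sigma\sigma^T$ is positive semidefinite (rank at most $2$) with $B_{ii} = 1 + \sigma_i^2$. Oppenheim's determinant inequality $|\mathrm{A}\odot B| \geq |\mathrm{A}|\prod_i B_{ii}$ then gives $|T| \geq |\mathrm{A}||\Sigma_Y + \mathrm{I}|$, and Minkowski's determinant inequality $|S+T|^{1/n} \geq |S|^{1/n} + |T|^{1/n}$ closes the Gaussian case.

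\emph{General $Y$.} The main obstacle is removing the Gaussianity of $Y$. The natural route follows Costa's original strategy: using the matrix de Bruijn identity $\nabla_{\mathrm{A}}\, h(Y + \mathrm{A}^{1/2}W) = \tfrac{1}{2} J_{\rm m}(Y + \mathrm{A}^{1/2}W)$, one reformulates $(\star)$ as a differential inequality along the path $\mathrm{A}(s) = s\mathrm{A}$ for $s \in [0,1]$ and reduces its verification to a matrix Fisher-information inequality of Cram\'er--Rao type. The delicate step is to establish a matrix FII strong enough to accommodate the geometric-mean weights $|\mathrm{I}-\mathrm{A}|^{1/n}$ and $|\mathrm{A}|^{1/n}$, rather than the trace-based weights implicit in the scalar Costa EPI.
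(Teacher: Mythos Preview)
Your reduction to $\Sigma_Z=\mathrm{I}$ is correct, and the Gaussian case via the Minkowski--Oppenheim combination is a clean self-contained argument. The gap is the general case: you do not actually prove $(\star)$ for non-Gaussian $Y$. You outline a Costa-style perturbation strategy and then explicitly flag the ``delicate step'' --- a matrix Fisher-information inequality compatible with the determinantal weights $|\mathrm{I}-\mathrm{A}|^{1/n}$ and $|\mathrm{A}|^{1/n}$ --- without supplying it. This is exactly the step where Liu et al.'s original argument breaks (an AM--GM applied to a product of PSD matrices that need not itself be PSD), and while commutativity does repair it, that repair is the substance of the proof and you have not carried it out. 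As written, your proposal is a proof of the Gaussian subcase together with a research plan for the remainder.

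The paper sidesteps Fisher information entirely. It invokes the ``triple EPI'' of Courtade: for independent $X,Y,W$ with $W$ Gaussian,
\[
e^{\frac{2}{n}h(X+W)}e^{\frac{2}{n}h(Y+W)} \;\geq\; e^{\frac{2}{n}h(X)}e^{\frac{2}{n}h(Y)} + e^{\frac{2}{n}h(W)}e^{\frac{2}{n}h(X+Y+W)}.
\]
Commutativity of $\mathrm{A}$ and $\Sigma_Z$ is used only once, to verify the covariance identity $\mathrm{A}^{1/2}\Sigma_Z\mathrm{A}^{1/2}+(\mathrm{I}-\mathrm{A})^{1/2}\Sigma_Z(\mathrm{I}-\mathrm{A})^{1/2}=\Sigma_Z$; the choice $Y=(\mathrm{I}-\mathrm{A})^{1/2}Z_1$, $W=\mathrm{A}^{1/2}Z_2$ then specializes the triple EPI to the theorem in one line. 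To complete your route you would either need to establish the missing matrix FII explicitly, or import a black box of comparable strength.
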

\begin{proof}
Since both  $\mathrm{A}$ and $\Sigma_Z$ are symmetric, they commute if and only if they are simultaneously  diagonalizable; i.e., $\Sigma_Z = \mathrm{U} \Lambda \mathrm{U}^T$ and $\mathrm{A} = \mathrm{U} \mathrm{D} \mathrm{U}^T$, for $\mathrm{U}$ orthogonal and $\Lambda, \mathrm{D}$ diagonal matrices. Since $\mathrm{A}^{1/2} = \mathrm{U} \mathrm{D}^{1/2} \mathrm{U}^T$ and $(\mathrm{I}-\mathrm{A})^{1/2}=\mathrm{U} (\mathrm{I} -\mathrm{D})^{1/2} \mathrm{U}^T$, we have the following identity
\begin{align}
\mathrm{A}^{1/2} \Sigma_Z \mathrm{A}^{1/2} + (\mathrm{I}-\mathrm{A})^{1/2}\Sigma_Z (\mathrm{I}-\mathrm{A})^{1/2} = \Sigma_Z.\label{matricesCommute}
\end{align}
It was established in \cite{courtade2016strengthening2} that if $X, Y, W$  are independent random vectors in $\mathbb{R}^n$, with $W$ Gaussian, then 
\begin{align}
e^{\frac{2}{n}h(X+ W)}e^{\frac{2}{n}h(Y+ W)} \geq e^{\frac{2}{n}h(X)}e^{\frac{2}{n}h(Y)} + e^{\frac{2}{n}h(W)}e^{\frac{2}{n}h(X+ Y+W)}. \label{tripleEPI}
\end{align}
To apply this to our setting, let $Y =  (\mathrm{I}-\mathrm{A})^{1/2}Z_1$ and $W =    \mathrm{A}^{1/2}Z_2$, where $Z_1,Z_2$ are independent copies of $Z$.  It follows from \eqref{matricesCommute} that $Y+W = Z$ in distribution, so that \eqref{tripleEPI} particularizes to
\begin{align}
e^{\frac{2}{n}h(X+ \mathrm{A}^{1/2}Z )}e^{\frac{2}{n}h(Z)} \geq  |\mathrm{I}-\mathrm{A}|^{1/n} e^{\frac{2}{n}h(X)}e^{\frac{2}{n}h(Z)} +  |\mathrm{A}|^{1/n}e^{\frac{2}{n}h(Z)}e^{\frac{2}{n}h(X+ Z)}.\notag
\end{align}
Dividing through by $e^{\frac{2}{n}h(Z)}$ completes the proof. 
\end{proof}

\begin{remark}
The proof above was originally given  in  \cite{courtade2016strengthening2} as an application demonstrating that  \eqref{eq:costaEPI} could be recovered from \eqref{tripleEPI}.  However, it was mistakenly claimed that $(\mathrm{I}-\mathrm{A})^{1/2}Z_1 +     \mathrm{A}^{1/2}Z_2$ was equal in distribution to  $Z$, which need not hold unless $\mathrm{A}$ and $\Sigma_Z$ commute.   
\end{remark}

It turns out that Liu et al.'s proof is also valid under the assumption that $\mathrm{A}$ and $\Sigma_Z$ commute. More precisely,  Liu et al.'s proof contains an incorrect application of the AM-GM  inequality in the form:
\begin{align}
|\Sigma_Z^{-1}\, \mathsf{Cov}(Z|\mathrm{D}_{\gamma}X+Z)(\mathrm{I}-\mathrm{D}_{\gamma}^{-2})|^{1/n}\leq \frac{1}{n}\mathsf{Tr}(\Sigma_Z^{-1}\, \mathsf{Cov}(Z|\mathrm{D}_{\gamma}X+Z)(\mathrm{I}-\mathrm{D}_{\gamma}^{-2})),\label{AMGM}
\end{align}
where $\mathrm{D}_{\gamma} := (\mathrm{I}+ \gamma(\mathrm{A}-\mathrm{I}))^{1/2}$, and $\gamma\in [0,1]$ parameterizes a path of perturbation.  Inequality \eqref{AMGM} can fail if the argument of the trace is not  positive semidefinite, and  a product of positive semidefinite matrices is  not necessarily positive semidefinite.  For instance, returning to the counterexample above where $X\sim N(0,\Sigma_X)$ and matrices are chosen according to \eqref{matrixChoice}, then the eigenvalues of $\Sigma_Z^{-1}\, \mathsf{Cov}(Z|\mathrm{D}_{\gamma}X+Z)(\mathrm{I}-\mathrm{D}_{\gamma}^{-2})$ can be approximately computed as   $\{-0.0053, -0.7273\}$ for $\gamma=0.5$, in violation of \eqref{AMGM}.

However,  if $\mathrm{A}$ and $\Sigma_Z$ commute, then so do $\Sigma_Z^{-1/2}$ and $(\mathrm{I}-\mathrm{D}_{\gamma}^{-2})^{1/2}$ (since both are simultaneously diagonalizable by a common orthogonal matrix $\mathrm{U}$).  Hence,   
\begin{align}
\mathsf{Tr}(\Sigma_Z^{-1}\, \mathsf{Cov}(Z|\mathrm{D}_{\gamma}X+Z)(\mathrm{I}-\mathrm{D}_{\gamma}^{-2}))
= \mathsf{Tr}((\mathrm{I}-\mathrm{D}_{\gamma}^{-2})^{1/2} \Sigma_Z^{-1/2}\, \mathsf{Cov}(Z|\mathrm{D}_{\gamma}X+Z)\Sigma_Z^{-1/2}(\mathrm{I}-\mathrm{D}_{\gamma}^{-2})^{1/2} ).\notag
\end{align}
The argument of the second trace term is clearly positive semidefinite, and therefore \eqref{AMGM} holds for all $\gamma\in[0,1]$ under the additional assumption that  $\mathrm{A}$ and $\Sigma_Z$ commute, thereby repairing Liu et al.'s proof.
 
In closing, we remark that the additional assumption that $\mathrm{A}$ and $\Sigma_Z$ commute is a relatively strong one.  It can easily be seen, using the simultaneous diagonalization property of $\mathrm{A}$ and $\Sigma_Z$ by a common orthogonal matrix $\mathrm{U}$, that Theorem \ref{thm:FixedCosta} has a completely equivalent statement where $Z\sim N(0,\mathrm{I})$ and $\mathrm{A}$ is restricted to be a diagonal matrix with diagonal entries $0 \leq a_{ii}\leq 1$, $i=1, \dots, n$.   %This is  a relatively  modest generalization of Costa's EPI, and therefore may have many proofs.

\end{document}